\newtheorem{theorem}{Theorem}
\newtheorem{corollary}{Corollary}
\newtheorem{proposition}{Proposition}
\newtheorem{lemma}{Lemma}
\newtheorem{example}{Example}
\newtheorem{definition}{Definition}
\newtheorem{prop}{Proposition}
\newtheorem{lem}{Lemma}
\newtheorem{cor}{Corollary}
\newtheorem{remark}{Remark}
\def\sds{\strut \displaystyle}
\def\R{{\mathbb R}}
\def\I{{\mathbb I}}
\newcommand{\beq}{\begin{equation}}
\newcommand{\eeq}{\end{equation}}
\newcommand{\beas}{\begin{eqnarray*}}
\newcommand{\eeas}{\end{eqnarray*}}
\newcommand{\bea}{\begin{eqnarray}}
\newcommand{\eea}{\end{eqnarray}}
\newcommand{\bei}{\begin{itemize}}
\newcommand{\eei}{\end{itemize}}
\newcommand{\ben}{\begin{enumerate}}
\newcommand{\een}{\end{enumerate}}
\newcommand{\bet}{\begin{theorem}}
\newcommand{\eet}{\end{theorem}}
\newcommand{\bel}{\begin{lemma}}
\newcommand{\eel}{\end{lemma}}
\newcommand{\bep}{\begin{proposition}}
\newcommand{\eep}{\end{proposition}}
\newcommand{\bed}{\begin{definition}}
\newcommand{\eed}{\end{definition}}
\newcommand{\bec}{\begin{corollary}}
\newcommand{\eec}{\end{corollary}}
\newcommand{\bex}{\begin{example}}
\newcommand{\eex}{\end{example}}
\newcommand{\argmin}{\mathop{\rm arg\min}}
\def\sb{{\cal B}}
\newenvironment{proof}[1][Proof]{\noindent\textbf{#1.} }{\
  \rule{0.5em}{0.5em}}
\begin{document}

\markboth{{\small\it RIP}} {{\small\it T. Cai, Y. Wang, and G. Xu}}

\title{New Bounds for Restricted Isometry Constants}

\author{T. Tony Cai\thanks{Department of Statistics, The Wharton School,
University of Pennsylvania, Philadelphia, PA 19104, USA; e-mail:
{\tt tcai@wharton.upenn.edu}. Research supported in part by NSF
Grant DMS-0604954 and NSF FRG Grant DMS-0854973.}
~ Lie Wang\thanks{Department of Mathematics,
Massachusetts Institute of Technology, Cambridge, MA 02139, USA;
e-mail: {\tt liewang@math.mit.edu}}~ and Guangwu
Xu\thanks{Department of EE \& CS, University of Wisconsin-Milwaukee,
Milwaukee, WI 53211, USA; e-mail: {\tt gxu4uwm@uwm.edu}. Research
supported in part by the National 973 Project of China (No.
2007CB807902).}}

\date{}

\maketitle

\begin{abstract}

In this paper we show that if the restricted isometry constant
$\delta_k$ of the compressed sensing matrix satisfies
\[
\delta_k < 0.307,
\]
then $k$-sparse signals are guaranteed to be recovered exactly via $\ell_1$
minimization when no noise is present and $k$-sparse signals can be
estimated stably in the noisy case. It is also shown that the bound cannot be
substantively improved. An explicitly example is constructed in which
$\delta_{k}=\frac{k-1}{2k-1} < 0.5$, but it is impossible to recover
certain $k$-sparse signals.
\end{abstract}

\noindent{\bf Keywords:\/} Compressed sensing, $\ell_1$ minimization,
restricted isometry property,  sparse signal recovery.

\section{Introduction}
Compressed sensing aims to recover high dimensional sparse signals
based on considerably fewer linear measurements. Formally one
considers the following model:
\beq
\label{model} y = \Phi\beta + z
\eeq
where the matrix $\Phi \in \R^{n\times p}$ (with $n \ll p$)  and $z\in
\R^n$ is a vector of measurement errors. The goal is to reconstruct
the unknown signal $\beta\in \R^p$ based on $y$ and $\Phi$. A
remarkable fact is that $\beta$ can be recovered exactly in the
noiseless case under suitable conditions, provided that the signal is sparse.

A na\"ive approach for solving this problem is to consider
$\ell_0$ minimization where the goal is to find the sparsest solution
in the feasible set of possible solutions. However this is NP hard and thus is
computationally infeasible. It is then natural to consider the
method of $\ell_1$ minimization which can be viewed as a convex
relaxation of $\ell_0$ minimization. The $\ell_1$ minimization method in
this context is
\beq (P_{\sb}) \quad\quad
\hat \beta = \argmin_{\gamma \in \R^p} \{\|\gamma\|_1 \; \mbox{
  subject to } \; y - \Phi\gamma\in \sb\}
\eeq
where $\sb$ is a bounded set determined by the noise structure. For example,
$\sb=\{0\}$ in the noiseless case and $\sb$ is the feasible set of the
noise in the case of bounded error. This method has
been successfully used as an effective way for reconstructing a
sparse signal in many settings. See, e.g., \cite{CanRomTao,CanTao05,CanTao06,CanTao07,Donoho,DonHuo,CaWaXu,CaWaXu1}.

One of the most commonly used frameworks for sparse recovery via $\ell_1$
minimization is the {\sl Restricted Isometry Property (RIP)}
introduced by Cand\`es and Tao \cite{CanTao05}.
RIP essentially requires that every subset of columns of $\Phi$ with
certain cardinality approximately behaves like an orthonormal system.
A vector $v=(v_i) \in
\R^p$ is {\it $k$-sparse} if $|\mbox{supp} (v)| \le k$, where $\mbox{supp} (v) = \{ i : v_i\neq
0\}$ is the support of $v$. For an $n\times p$ matrix $\Phi$ and an integer $k$, $1\le k \le p$,
the {\it $k$-restricted isometry constant} $\delta_k(\Phi)$ is the smallest constant such that
\begin{equation}\label{cond:2.1}
\sqrt{1-\delta_k(\Phi)}\|c\|_2 \le \|\Phi c\|_2 \le \sqrt{1+\delta_k(\Phi)}\|c\|_2
\end{equation}
for every $k$-sparse vector $c$. If $k+k'\le p$, the
{\it $k, k'$-restricted orthogonality constant} $\theta_{k,k'}(\Phi)$,
is the smallest number that satisfies
\begin{equation}\label{cond:2.2}
|\langle \Phi c, \Phi c'\rangle| \le \theta_{k, k'}(\Phi)\|c\|_2\|c'\|_2,
\end{equation}
for all $c$ and $c'$ such that $c$ and $c'$ are $k$-sparse and $k'$-sparse respectively, and have
disjoint supports. For notational simplicity we shall write $\delta_k$
for $\delta_k(\Phi)$ and $\theta_{k, k'}$ for $\theta_{k, k'}(\Phi)$
hereafter.

It has been shown that $\ell_1$ minimization can recover a sparse
signal with a small or zero error under various conditions on
$\delta_k$ and $\theta_{k,k'}$. For example, the condition
$\delta_k+\theta_{k,k}+\theta_{k,2k}<1$ was used in Cand\`es and Tao
\cite{CanTao05}, $\delta_{3k}+3\delta_{4k}<2$ in Cand\`es, Romberg and
Tao \cite{CanRomTao}, and $\delta_{2k}+\theta_{k,2k}<1$ in Cand\`es
and Tao \cite{CanTao07}. In \cite{CaXuZh}, Cai, Xu and Zhang proved
that stable recovery can be achieved when
$\delta_{1.5k}+\theta_{k,1.5k}<1$
\footnote{For a positive real number $\alpha$,  $\delta_{\alpha k}$
  and $\theta_{k,\alpha k}$ are understood as $\delta_{\lceil
\alpha k \rceil}$ and $\theta_{k,\lceil \alpha k \rceil}$. }.
In a recent paper, Cai, Wang and Xu \cite{CaWaXu1} further improve
the condition to  $\delta_{1.25k}+\theta_{k,1.25k}<1$.

It is important to note that RIP conditions are difficult to verify
for a given matrix $\Phi$. A widely used technique for avoiding checking
the RIP directly is to generate the matrix $\Phi$
randomly and to show that the resulting random matrix satisfies the
RIP with high probability using the well-known Johnson-Lindenstrauss
Lemma. See, for example, Baraniuk, et al. \cite{BaDaDeWa}. This is
typically done for conditions involving only the restricted isometry
constant $\delta$. Attention has been focused on $\delta_{2k}$ as it
is obviously necessary to have $\delta_{2k}<1$ for model
identifiability.  In a recent
paper, Davies and Gribonval \cite{DavGri} constructed examples which
showed that  if $\delta_{2k}\ge \frac{1}{\sqrt{2}}$,
exact recovery of certain $k$-sparse signal can fail in the noiseless case.
On the other hand, sufficient conditions on $\delta_{2k}$ has been given.
For example, $\delta_{2k} < \sqrt{2} -1$ is used by Cand\`es
\cite{Candes} and $\delta_{2k} < 0.4531$ by Fouchart and Lai \cite{FOLA}.
The results given in Cai, Wang and Xu \cite{CaWaXu1} implies that
$\delta_{2k}< 0.472$ is a sufficient condition for sparse signal recovery.

Among the conditions of the form $\delta_{ck} < \alpha$, the most
natural and desirable condition for recovering a $k$-sparse signal is
arguably
\[
\delta_{k} < \alpha,
\]
for some quantity $\alpha$.

The purpose of this paper is to establish, to the best of our
knowledge, the first such condition on $\delta_k$. To be more specific, we
show that under the condition
\beq
\label{k-bound}
\delta_{k} < 0.307,
\eeq
$k$-sparse signals are guaranteed to be recovered exactly via $\ell_1$
minimization when no noise is present and $k$-sparse signals can be
estimated stably in the noisy case. Although we are mainly interested
in recovering sparse signals, the results can be extended to the
general setting where the true signal is not necessarily $k$-sparse.

It is also shown in the present paper that the bound (\ref{k-bound}) cannot be
substantively improved. An upper bound for $\delta_{k}$ is also
given. An explicitly example is constructed in which
$\delta_{k}=\frac{k-1}{2k-1} < 0.5$, but it is impossible to recover
certain $k$-sparse signals.

Our analysis is simple and elementary.
The main ingredients in proving the new RIP conditions are the {\sl norm-inequality for $\ell_1$ and $\ell_2$}, and the {\sl square root lifting
inequality} for the restricted orthogonality constant $\theta_{k,k'}$.
Let $x\in \R^n$. A direct consequence of the Cauchy-Schwartz inequality
is that $0\le \|x\|_2-\frac{\|x\|_1}{\sqrt{n}}$. Our norm-inequality for $\ell_1$ and $\ell_2$ gives an upper bound
for the quantity $\|x\|_2-\frac{\|x\|_1}{\sqrt{n}}$, namely
\begin{equation}\label{eq:1.1}
\|x\|_2 - \frac{\|x\|_1}{\sqrt{n}}\le\frac{\sqrt{n}}{4}\big(\max_{1\le i\le n} |x_i|-\min_{1\le
i\le n} |x_i|\big).
\end{equation}
This is an inequality of its own interest. The square root lifting inequality is a result we developed in \cite{CaWaXu1} which states
that if $a \ge 1$ and $k', ak'$ are positive integers, then
\begin{equation}\label{eq:1.2}
\theta_{k, ak'} \le \sqrt{a}\theta_{k, k'}.
\end{equation}
Indeed we derive a more general result on RIP and obtain
(\ref{k-bound}) as a special case.

The paper is organized as follows. After Section \ref{sec:ripproperies}, in
which some basic properties of restricted isometry constants are
discussed, we introduce in Section \ref{sec:nmineq} a norm inequality
for $\ell_1$ and $\ell_2$, which enables us to make finer analysis of
the sparse recovery problem. Our new RIP bounds are presented in
Section \ref{sec:newbds}. In Section \ref{sec:uppbds}, upper bounds
for RIP constants are given.

\section{Some Properties of Restricted Isometry Constants}
\label{sec:ripproperies}

We begin by introducing basic notations and definitions related to the RIP.
We also collect a few
elementary results needed for the later sections.

For a vector $v=(v_i) \in \R^p$, we shall denote by $v_{\max(k)}$ the vector $v$ with all but the
$k$ largest entries (in absolute value) set to zero and define $v_{-\max(k)} = v - v_{\max(k)}$,
the vector $v$ with the $k$ largest entries (in absolute value) set to zero. We use the standard
notation $\|v\|_q=(\sum_{i=1}^p |v_i|^q)^{1/q}$ to denote the $\ell_q$-norm of the vector $v$. We
shall also treat a vector $v=(v_i)$ as a function $v: \{1,2,\cdots, p\}\rightarrow \R$ by assigning
$v(i)=v_i$.

For a subset $T$ of  $ \{1, \cdots, p\}$, we use $\Phi_T$ to denote the submatrix obtained by
taking the columns of $\Phi$ according to the indices in $T$. Let
\[
\mathcal{SSV}_T = \{\lambda
: \lambda \mbox{ an eigenvalue of }\Phi'_T\Phi_T \},
\]
 and $\sds \Lambda_{\min}(k)=\min
\{\cup_{|T|\le k}\mathcal{SSV}_T\}$, $\sds \Lambda_{\max}(k)=\max\{\cup_{|T|\le k}\mathcal{SSV}_T\}$. It
can be seen that
\[
1-\delta_{k}\le \Lambda_{\min}(k)\le \Lambda_{\max}(k)\le 1+\delta_k.
\]
Hence the condition (\ref{cond:2.1}) can be viewed as a condition on
$\Lambda_{\min}(k)$ and $\Lambda_{\max}(k)$.

The following monotone properties can be easily checked,
\begin{eqnarray}\label{eq:2.3}
&&\delta_k \le \delta_{k_1}, \mbox{ if } k\le k_1\le p\\
\label{eq:2.4} &&\theta_{k,k'}\le \theta_{k_1, k_1'}, \mbox{ if }  k\le k_1,  k'\le k_1', \mbox{
and } k_1+k_1'\le p.
\end{eqnarray}

Cand\`es and Tao \cite{CanTao05} showed that the constants $\delta_k$ and $\theta_{k,k'}$ are
related by the following inequalities,
\begin{equation}\label{eq:2.5}
\theta_{k,k'} \le \delta_{k+k'}\le \theta_{k,k'} +\max(
\delta_{k},\delta_{k'}).
\end{equation}
In the following, we list several refinements to the inequalities (\ref{eq:2.5}) whose proofs will be
provided in the appendix.
\begin{lem}\label{lem:2.0}
For any positive integers $k$ and $k'$, we have
\begin{eqnarray}
&&\delta_{k+k'}\leq \theta_{k,k'}+\frac{k\delta_k+k'\delta_{k'}}{k+k'} \label{eqn2.0-1}\\
&&\delta_{k+k'}\leq \frac{2\sqrt{k k'
}}{k+k'}\theta_{k,k'}+\max\{\delta_k,\delta_{k'}\}\label{eqn2.0-2}
\end{eqnarray}
\end{lem}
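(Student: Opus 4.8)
My plan is to prove both inequalities (\ref{eqn2.0-1}) and (\ref{eqn2.0-2}) by refining the standard argument behind (\ref{eq:2.5}), keeping track of the relative sizes $k$ and $k'$ of the two supports rather than simply bounding by $\max\{\delta_k,\delta_{k'}\}$. First I would fix a $(k+k')$-sparse vector $c$ achieving (or nearly achieving) the extremal behaviour defining $\delta_{k+k'}$, and split its support $T$ with $|T|\le k+k'$ into two disjoint pieces $T_1,T_2$ with $|T_1|\le k$ and $|T_2|\le k'$; write $c=c_1+c_2$ accordingly. Then
\[
\|\Phi c\|_2^2 = \|\Phi c_1\|_2^2 + \|\Phi c_2\|_2^2 + 2\langle \Phi c_1,\Phi c_2\rangle,
\]
and I would bound the diagonal terms using $1-\delta_k$ and $1+\delta_k$ (resp.\ for $k'$) and the cross term using $\theta_{k,k'}\|c_1\|_2\|c_2\|_2$. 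Comparing $\|\Phi c\|_2^2$ with $\|c\|_2^2=\|c_1\|_2^2+\|c_2\|_2^2$ gives
\[
\big|\,\|\Phi c\|_2^2 - \|c\|_2^2\,\big| \le \delta_k\|c_1\|_2^2 + \delta_{k'}\|c_2\|_2^2 + 2\theta_{k,k'}\|c_1\|_2\|c_2\|_2 .
\]

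For (\ref{eqn2.0-1}), the point is that we are free to choose \emph{how} to partition $T$, so I would distribute the coordinates of $c$ into $T_1$ and $T_2$ so as to balance the mass: by an averaging/pigeonhole argument one can arrange $\|c_1\|_2^2 \le \tfrac{k}{k+k'}\|c\|_2^2$ and $\|c_2\|_2^2 \le \tfrac{k'}{k+k'}\|c\|_2^2$ (more precisely, average over all admissible partitions, so that in expectation $\|c_1\|_2^2$ equals $\tfrac{k}{k+k'}\|c\|_2^2$, and pick a good one). Feeding these into the displayed bound, together with $2\|c_1\|_2\|c_2\|_2\le\|c\|_2^2$ for the cross term, yields $\delta_{k+k'}\le \theta_{k,k'} + \tfrac{k\delta_k+k'\delta_{k'}}{k+k'}$ after normalizing $\|c\|_2=1$. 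For (\ref{eqn2.0-2}), I would instead keep the bound in the form $\delta_k\|c_1\|_2^2+\delta_{k'}\|c_2\|_2^2 \le \max\{\delta_k,\delta_{k'}\}\|c\|_2^2$ and maximize the cross term $2\theta_{k,k'}\|c_1\|_2\|c_2\|_2$ subject to the constraint that, under the best balanced partition, $\|c_1\|_2^2/\|c_2\|_2^2$ is pinned near $k/k'$; since $2\|c_1\|_2\|c_2\|_2 = 2\sqrt{\|c_1\|_2^2\|c_2\|_2^2}$ is then at most $\tfrac{2\sqrt{kk'}}{k+k'}\|c\|_2^2$, we get (\ref{eqn2.0-2}).

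The main obstacle is the partitioning step: one must be careful that a \emph{single} partition of the support simultaneously controls both $\|c_1\|_2^2$ and $\|c_2\|_2^2$ in the way each inequality needs, and that this partition respects the sparsity budgets $|T_1|\le k$, $|T_2|\le k'$ exactly. The clean way around this is the probabilistic argument: choose the partition uniformly at random among all ways of assigning the (at most $k+k'$) support coordinates so that $T_1$ gets $k$ of them and $T_2$ gets $k'$; then $\mathbb{E}\|c_1\|_2^2 = \tfrac{k}{k+k'}\|c\|_2^2$ exactly, and one selects a partition for which the relevant linear combination of $\|c_1\|_2^2$, $\|c_2\|_2^2$, and $\|c_1\|_2\|c_2\|_2$ is at most its expectation. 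A secondary technical point is handling the case $|T|<k+k'$ (pad with zero coordinates) and the degenerate cases $k=k'$, where (\ref{eqn2.0-2}) reduces to the known $\delta_{2k}\le\theta_{k,k}+\delta_k$. Everything else is the routine expansion above, so I would keep it brief and defer it to the appendix as the authors indicate.
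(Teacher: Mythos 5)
Your overall skeleton coincides with the paper's: split a $(k+k')$-sparse $c$ into $c_1+c_2$ with a $k$-sparse and a $k'$-sparse piece, expand $\|\Phi c\|_2^2$, bound the diagonal terms with $\delta_k,\delta_{k'}$ and the cross term with $\theta_{k,k'}$, and then choose the partition well. For (\ref{eqn2.0-1}) your argument goes through, but only thanks to your final fallback: the claim that one can arrange $\|c_1\|_2^2\le \frac{k}{k+k'}\|c\|_2^2$ \emph{and} $\|c_2\|_2^2\le \frac{k'}{k+k'}\|c\|_2^2$ simultaneously is impossible in general, since the two quantities sum to $\|c\|_2^2$ and the two bounds also sum to $\|c\|_2^2$, so both can hold only with equality (a vector with one dominant coordinate shows no partition comes close). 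What does work is precisely the linear averaging you mention at the end: over a uniform random partition into blocks of sizes $k$ and $k'$ one has $\mathbb{E}\big[\delta_k\|c_1\|_2^2+\delta_{k'}\|c_2\|_2^2\big]=\frac{k\delta_k+k'\delta_{k'}}{k+k'}\|c\|_2^2$, so some partition does no worse, while $2\|c_1\|_2\|c_2\|_2\le\|c\|_2^2$ holds for every partition.

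For (\ref{eqn2.0-2}) there is a genuine gap. The assertion that under a ``best balanced partition'' the ratio $\|c_1\|_2^2/\|c_2\|_2^2$ is pinned near $k/k'$ is false (again, one dominant coordinate forces every partition to be badly unbalanced), and $\|c_1\|_2\|c_2\|_2$ is not linear in the partition, so ``pick a partition at most its expectation'' does not follow from the linearity argument you invoke. What you actually need is only a one-sided bound, and this is exactly what the paper arranges deterministically: assuming without loss of generality $k\le k'$, give the $k'$ largest entries (in absolute value) to $c_2$, so that $\|c_1\|_2^2\le\frac{k}{k+k'}\|c\|_2^2\le\frac{1}{2}\|c\|_2^2$; since $x\mapsto x(1-x)$ is increasing on $[0,\frac{1}{2}]$, this yields $2\|c_1\|_2\|c_2\|_2\le\frac{2\sqrt{kk'}}{k+k'}\|c\|_2^2$, while the diagonal terms are at most $\max\{\delta_k,\delta_{k'}\}\|c\|_2^2$ for any partition. (If you insist on the probabilistic route, it can be repaired: $\mathbb{E}\big[\|c_1\|_2^2\|c_2\|_2^2\big]=\|c\|_2^2\,\mathbb{E}\|c_1\|_2^2-\mathbb{E}\big[\|c_1\|_2^4\big]\le\frac{kk'}{(k+k')^2}\|c\|_2^4$, then Jensen.) For comparison, the paper's proof of (\ref{eqn2.0-1}) is also deterministic: with $k\le k'$ it gives the $k$ \emph{largest} entries to $c_1$, so $\|c_1\|_2^2\ge\frac{k}{k+k'}\|c\|_2^2$, and then uses the monotonicity $\delta_k\le\delta_{k'}$; your averaging version avoids that monotonicity step but is otherwise the same computation.
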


The following properties for $\delta$ and $\theta$, developed by
Cai, Xu and Zhang in \cite{CaXuZh}, have been especially useful in producing simplified recovery conditions:
\begin{equation}\label{eq:2.6}
\theta_{k, \sum_{i=1}^l k_i}\le \sqrt{\sum_{i=1}^l \theta_{k, k_i}^2}
\le \sqrt{ \sum_{i=1}^l \delta_{k+k_i}^2}.
\end{equation}
It follows from (\ref{eq:2.6}) that  for any positive integer $a$, we have $\theta_{k,ak'}\leq
\sqrt{a}\theta_{k,k'}$. This fact was further generalized by Cai, Wang and Xu in \cite{CaWaXu1} to the following
square root lifting inequality.
\begin{lem}
\emph{(Square root lifting inequality)}
\label{lem:2.1}
For any $a\ge 1$ and positive integers $k, k'$ such that $ak'$ is an integer,
\begin{equation}\label{eq:2.7}
\theta_{k,ak'}\leq \sqrt{a}\theta_{k,k'}.
\end{equation}
\end{lem}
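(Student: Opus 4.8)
The plan is to reduce the non-integer case to the already-known integer case by an averaging argument over subsets. Write $m=ak'$, which by hypothesis is a positive integer, and note $m\ge k'$ since $a\ge 1$. By the definition (\ref{cond:2.2}) of the restricted orthogonality constant, it suffices to show that whenever $S,T\subseteq\{1,\dots,p\}$ are disjoint with $|S|\le k$ and $|T|\le m$, and $c,c'$ are vectors supported on $S$ and $T$ respectively, then $|\langle \Phi c,\Phi c'\rangle|\le \sqrt{a}\,\theta_{k,k'}\|c\|_2\|c'\|_2$; by the monotonicity (\ref{eq:2.4}) we may assume $|T|=m$.

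First I would decompose $c'$ into its restrictions to $k'$-element subsets of $T$. Let $\mathcal{B}$ be the family of all $k'$-subsets of $T$, so that $|\mathcal{B}|=\binom{m}{k'}$, and for each $B\in\mathcal{B}$ let $c'_B$ denote the vector that agrees with $c'$ on $B$ and vanishes outside $B$. Since a fixed index of $T$ belongs to exactly $\binom{m-1}{k'-1}$ members of $\mathcal{B}$, we obtain the identity $\binom{m-1}{k'-1}\,c'=\sum_{B\in\mathcal{B}}c'_B$. Each $c'_B$ is $k'$-sparse and supported inside $T$, hence has support disjoint from that of the $k$-sparse vector $c$, so (\ref{cond:2.2}) gives $|\langle\Phi c,\Phi c'_B\rangle|\le\theta_{k,k'}\|c\|_2\|c'_B\|_2$. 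Summing over $B$ and applying the triangle inequality,
\begin{equation}
\Big|\langle\Phi c,\Phi c'\rangle\Big|\le\frac{\theta_{k,k'}\|c\|_2}{\binom{m-1}{k'-1}}\sum_{B\in\mathcal{B}}\|c'_B\|_2.
\end{equation}

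The remaining step is to estimate $\sum_{B\in\mathcal{B}}\|c'_B\|_2$. By the Cauchy-Schwarz inequality, $\sum_{B}\|c'_B\|_2\le\binom{m}{k'}^{1/2}\big(\sum_{B}\|c'_B\|_2^2\big)^{1/2}$, and a double-counting argument (each index of $T$ appears in $\binom{m-1}{k'-1}$ of the sets $B$) gives $\sum_{B}\|c'_B\|_2^2=\binom{m-1}{k'-1}\|c'\|_2^2$. Substituting this back and using the elementary identity $\binom{m}{k'}\big/\binom{m-1}{k'-1}=m/k'=a$, we obtain $|\langle\Phi c,\Phi c'\rangle|\le\sqrt{a}\,\theta_{k,k'}\|c\|_2\|c'\|_2$, which is exactly the inequality (\ref{eq:2.7}).

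I do not expect a genuine obstacle here; the work is essentially combinatorial bookkeeping, and the only points requiring care are the two identities $\binom{m-1}{k'-1}c'=\sum_B c'_B$ and $\sum_B\|c'_B\|_2^2=\binom{m-1}{k'-1}\|c'\|_2^2$, together with the cancellation of binomial coefficients at the end. One could alternatively try to iterate (\ref{eq:2.6}), but that route only produces integer multipliers directly; it is the averaging over all $k'$-subsets of $T$ that delivers the sharp factor $\sqrt{m/k'}$ for an arbitrary admissible ratio $m/k'=a$.
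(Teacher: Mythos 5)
Your argument is correct: the identity $\binom{m-1}{k'-1}c'=\sum_{B}c'_B$, the bound $|\langle\Phi c,\Phi c'_B\rangle|\le\theta_{k,k'}\|c\|_2\|c'_B\|_2$ for each $k'$-subset $B$ of $T$, the Cauchy--Schwarz step with $\sum_B\|c'_B\|_2^2=\binom{m-1}{k'-1}\|c'\|_2^2$, and the cancellation $\binom{m}{k'}/\binom{m-1}{k'-1}=m/k'=a$ all check out, and padding the support of $c'$ to a set $T$ of size exactly $m=ak'$ is legitimate since $k+ak'\le p$ is already required for $\theta_{k,ak'}$ to be defined. Note, however, that this paper does not prove Lemma~\ref{lem:2.1} at all: it is quoted from \cite{CaWaXu1}, and the appendix only proves Lemma~\ref{lem:2.0}, Corollary~\ref{cor:2.1} and the completion of Theorem~\ref{thm:4.2}. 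So there is no in-paper proof to compare against; what you have written is a correct, self-contained substitute for the external citation, and your closing observation is the right one --- iterating (\ref{eq:2.6}) only yields the factor $\sqrt{\lceil a\rceil}$ for integer multipliers, whereas the uniform averaging over all $k'$-subsets of the $ak'$-element support is what produces the sharp constant $\sqrt{a}$ for every admissible rational ratio $a=m/k'$.
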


Using the square root lifting inequality and other properties for
RIP constants we mentioned earlier, some interesting results can be
produced. For example,
\begin{cor}
\label{cor:2.1} For any integer $k\ge 1$,
\bea
\delta_{4k} &\le& 3\delta_{2k}. \label{eq:2.8}\\
\delta_{3k} &\le& \frac{1}3\delta_{k} + (\sqrt{2}+\frac{2}3 )\delta_{2k}.
\label{eq:2.9}
\eea
\end{cor}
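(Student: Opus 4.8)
The plan is to combine the refined inequality (\ref{eqn2.0-1}) of Lemma~\ref{lem:2.0} with the square root lifting inequality (Lemma~\ref{lem:2.1}) and the elementary bound $\theta_{k,k}\le\delta_{2k}$, which is the left half of (\ref{eq:2.5}).

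For (\ref{eq:2.8}) I would apply (\ref{eqn2.0-1}) with both blocks equal to $2k$; since the weighted average of $\delta_{2k}$ with itself is again $\delta_{2k}$, this gives $\delta_{4k}\le\theta_{2k,2k}+\delta_{2k}$. It then suffices to show $\theta_{2k,2k}\le 2\delta_{2k}$. Here Lemma~\ref{lem:2.1} is applied twice, together with the symmetry of $\theta$: first $\theta_{2k,2k}\le\sqrt{2}\,\theta_{2k,k}=\sqrt{2}\,\theta_{k,2k}$, then $\theta_{k,2k}\le\sqrt{2}\,\theta_{k,k}$, so $\theta_{2k,2k}\le 2\theta_{k,k}\le 2\delta_{2k}$. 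Substituting back yields $\delta_{4k}\le 3\delta_{2k}$. (Alternatively one gets $\theta_{2k,2k}\le\sqrt{2}\,\theta_{2k,k}$ and $\theta_{k,2k}\le\sqrt{2}\,\theta_{k,k}$ directly from (\ref{eq:2.6}) with $l=2$ and equal parts.)

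For (\ref{eq:2.9}) I would instead apply (\ref{eqn2.0-1}) with blocks of sizes $2k$ and $k$, obtaining $\delta_{3k}\le\theta_{2k,k}+\frac{2k\delta_{2k}+k\delta_k}{3k}=\theta_{2k,k}+\frac{2}{3}\delta_{2k}+\frac{1}{3}\delta_k$. Using symmetry and a single application of the square root lifting inequality, $\theta_{2k,k}=\theta_{k,2k}\le\sqrt{2}\,\theta_{k,k}\le\sqrt{2}\,\delta_{2k}$ by (\ref{eq:2.5}), and substituting gives $\delta_{3k}\le\frac{1}{3}\delta_k+(\sqrt{2}+\frac{2}{3})\delta_{2k}$.

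There is essentially no obstacle here; the only points requiring care are choosing the split of the index in (\ref{eqn2.0-1}) so that the induced weighted average produces exactly the stated coefficients, and remembering that $\theta_{k,k'}=\theta_{k',k}$ so that the lifting inequality can be applied to whichever slot is convenient.
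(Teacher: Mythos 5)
Your proposal is correct and follows essentially the same route as the paper: bound $\delta_{3k}$ (resp.\ $\delta_{4k}$) by a $\theta$ term plus a weighted average of $\delta$'s, then reduce the $\theta$ term to $\theta_{k,k}\le\delta_{2k}$ via symmetry and the square root lifting inequality. The only cosmetic difference is that for (\ref{eq:2.8}) you invoke (\ref{eqn2.0-1}) with equal blocks where the paper uses the right-hand inequality of (\ref{eq:2.5}); with $k=k'$ these give the identical bound $\delta_{4k}\le\theta_{2k,2k}+\delta_{2k}$.
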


\section{A Norm Inequality for $\ell_1$ and $\ell_2$}
\label{sec:nmineq}

In this section, we will develop a useful inequality for achieving finer
conversion between $\ell_1$-norm and  $\ell_2$-norm.

Let $x=(x_1,x_2,\cdots,x_n)\in \R^n$. A direct consequence of the
Cauchy-Schwartz inequality is that
\[
0\le \|x\|_2-\frac{\|x\|_1}{\sqrt{n}}
\]
and the equality hold if and only if $|x_1|=|x_2|=\cdots=|x_n|$. The next result reveals some information
about how large the quantity $\|x\|_2-\frac{\|x\|_1}{\sqrt{n}}$ can be.

\begin{prop}
\label{norm.ieq} For any $x \in \R^n$,
\[
\|x\|_2 - \frac{\|x\|_1}{\sqrt{n}}\le\frac{\sqrt{n}}{4}\big(\max_{1\le i\le n} |x_i|-\min_{1\le
i\le n} |x_i|\big).
\]
The equality is attained if and only if $|x_1|=|x_2|=\cdots=|x_n|$, or $n=4m$ for some positive
integer $m$ and $x$ satisfies $|x_{i_1}|=|x_{i_2}|=\cdots=|x_{i_m}|$
for some $1\le i_1 < i_2<\cdots < i_m\le n$ and
$x_{k}=0$ for $k\notin \{i_1, i_2, ..., i_m\}$.
\end{prop}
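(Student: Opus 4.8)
The plan is to normalize and reduce to a one-variable optimization. First I would observe that both sides of the inequality are invariant under permutations of the coordinates, under replacing each $x_i$ by $|x_i|$, and (being positively homogeneous of degree $1$) under positive scaling. So I may assume $x_1 \ge x_2 \ge \cdots \ge x_n \ge 0$ and, after scaling, that $x_1 = \max_i |x_i| = 1$; write $t = x_n = \min_i |x_i| \in [0,1]$. The claim becomes: for any such $x$,
\[
\|x\|_2 \le \frac{\|x\|_1}{\sqrt n} + \frac{\sqrt n}{4}(1-t).
\]
The natural approach is to fix $\|x\|_1 = s$ (equivalently $x_2 + \cdots + x_{n-1}$, since $x_1,x_n$ are pinned) and maximize $\|x\|_2^2 = \sum x_i^2$ subject to the box constraints $t \le x_i \le 1$ and the linear constraint. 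Since $\|x\|_2^2$ is convex, its maximum over this polytope is attained at an extreme point, i.e.\ at a vector where all coordinates but possibly one are pushed to the endpoints $1$ or $t$. So the extremal $x$ has the shape: $j$ coordinates equal to $1$, one coordinate equal to some $r \in [t,1]$, and $n-1-j$ coordinates equal to $t$, for some $0 \le j \le n-1$.

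Next I would substitute this parametrized family into the desired inequality and reduce to verifying a concrete inequality in the discrete parameter $j$ and the continuous parameters $r \in [t,1]$ and $t \in [0,1]$. At that point both sides are explicit: $\|x\|_1 = j + r + (n-1-j)t$ and $\|x\|_2 = \sqrt{j + r^2 + (n-1-j)t^2}$. It is cleanest to prove the stronger-looking but equivalent statement coordinate by coordinate using a convexity/tangent-line trick: for $u \in [t,1]$ one has the elementary bound
\[
\sqrt{u} \;-\; \text{(something linear in } u\text{)} \;\le\; 0,
\]
but more to the point I would use the scalar inequality that drives everything, namely that for $0 \le b \le a$ and $u\in[b,a]$,
\[
u^2 \le \Big(\frac{u-b}{a-b}\Big)a^2 + \Big(\frac{a-u}{a-b}\Big)b^2,
\]
which is just convexity of $u \mapsto u^2$ on $[b,a]$. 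Applying this with $a=1$, $b=t$ to each coordinate $x_i$ gives $\|x\|_2^2 \le A + B\|x\|_1$ for explicit $A=A(n,t)$, $B=B(n,t)$ linear expressions, and then I reduce to showing $\sqrt{A + B\|x\|_1} \le \|x\|_1/\sqrt n + \frac{\sqrt n}{4}(1-t)$ on the interval of achievable $\|x\|_1$ values; squaring turns this into a quadratic inequality in the single variable $\|x\|_1$ whose discriminant I can compute and show is $\le 0$, with the discriminant vanishing exactly in the equality cases.

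The main obstacle is the equality analysis — getting the clean characterization (either all $|x_i|$ equal, or $n=4m$ with $m$ equal nonzero entries and the rest zero). This will come from tracking when every inequality in the chain is tight: the per-coordinate convexity bound $u^2 \le \cdots$ is an equality only when $u \in \{t,1\}$, so in the equality case each $x_i \in \{t,1\}$; and then the quadratic-in-$\|x\|_1$ inequality is tight only at the vertex of the parabola, forcing a specific value of $\|x\|_1$, hence a specific count $j$ of coordinates equal to $1$. Chasing this through, the tight value will force either $t=1$ (all coordinates equal) or $t=0$ together with $j = n/4$, i.e.\ $n=4m$ and exactly $m$ coordinates equal to the common value $1$ (all others zero) — matching the statement after undoing the normalization. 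I expect the bookkeeping here, rather than any conceptual difficulty, to be the delicate part, so I would set it up carefully and present the quadratic discriminant computation in full.
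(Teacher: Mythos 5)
Your proposal is correct, and it takes a genuinely different route from the paper. The paper argues variationally: after sorting and fixing $x_1,x_n$, it computes $\partial f/\partial x_i$ to show a maximizer must be two-valued ($x_1=\cdots=x_k$, $x_{k+1}=\cdots=x_n$), then relaxes the integer $k$ to a continuous variable, optimizes $g(k)$ by calculus, and bounds the resulting expression via $\frac{x_1+3x_n}{4(x_1+x_n)}\ge\frac14$. Your argument replaces all of this with the secant-line bound $u^2\le(1+t)u-t$ for $u\in[t,1]$ (convexity of $u\mapsto u^2$), which after normalizing $\max_i|x_i|=1$, $\min_i|x_i|=t$ sums to $\|x\|_2^2\le(1+t)\|x\|_1-nt$; the desired inequality then reduces to nonnegativity of the quadratic $\frac{s^2}{n}-\frac{1+3t}{2}s+\frac{n(1+14t+t^2)}{16}$ in $s=\|x\|_1$, whose discriminant is $2t(t-1)\le0$ --- I checked this computation and it is right. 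What your route buys: it is calculus-free, avoids the paper's somewhat informal steps (treating the count $k$ as a continuous parameter, and the implicit handling of the ordering/boundary constraints in the partial-derivative argument), and it makes the equality analysis mechanical: tightness forces each $|x_i|\in\{t,1\}$, discriminant zero forces $t\in\{0,1\}$, and for $t=0$ the double root $s=n/4$ forces exactly $n/4$ coordinates at the common nonzero value, i.e.\ $n=4m$ with $m$ equal entries and the rest zero, exactly the stated characterization. Note that your opening extreme-point reduction (fixing $\|x\|_1$ and maximizing the convex function $\|x\|_2^2$ over the polytope) becomes superfluous once you invoke the per-coordinate chord bound, so you can drop it; also remember to record the trivial degenerate cases ($x=0$, or $t=1$, where all coordinates are equal) and the easy converse verification that both described configurations indeed give equality.
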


\begin{proof}
It is obvious that the result holds when the absolute values of all
coordinates are equal. Without loss of generality, we now assume that
$x_1\geq x_2\geq\cdots\geq x_n\geq 0$ and not all $x_i$ are equal. Let
$$f(x)=\|x\|_2-\frac{\|x\|_1}{\sqrt{n}}.$$
Note that for any $i\in \{2,3,\cdots,n-1\}$,
$$\frac{\partial f}{\partial
x_i}=\frac{x_i}{\|x\|_2}-\frac{1}{\sqrt{n}}.$$
This implies that when $x_i\leq \frac{\|x\|_2}{\sqrt{n}}$,
$f(x)$ is decreasing in $x_i$; otherwise $f(x)$ is increasing in $x_i$.
Therefore, if we fix $x_1$ and
$x_n$, when $f(x)$ achieves its maximum, $x$ must be of the form that $x_1=x_2=\cdots=x_k$ and
$x_{k+1}=\cdots=x_n$ for some $1\leq k < n$. Now
$$f(x)=\sqrt{k(x_1^2-x_n^2)+n
x_n^2}-\frac{k}{\sqrt{n}}(x_1-x_n)-\sqrt{n}x_n.$$
Treat this as a function of $k$ for $k\in (0,n)$
$$g(k)=\sqrt{k(x_1^2-x_n^2)+n
x_n^2}-\frac{k}{\sqrt{n}}(x_1-x_n)-\sqrt{n}x_n.$$
By taking derivative, it is easy to see that
$$g(k)\leq
g(n\frac{(\frac{x_1+x_n}{2})^2-x_n^2}{x_1^2-x_n^2})=\sqrt{n}(x_1-x_n)(\frac{1}{2}-\frac{x_1+3x_n}{4(x_1+x_n)}).$$
Now, since $\frac{x_1+3x_n}{4(x_1+x_n)}\geq 1/4$, we have
\[
\|x\|_2 \le \frac{\|x\|_1}{\sqrt{n}}+\frac{\sqrt{n}}{4}\big(x_1-x_n\big).
\]
We can also see that the above inequality becomes equality if and only
if $x_{k+1}=\cdots =x_n=0$ and $k=n/4$.
\end{proof}

\begin{remark}{\rm
A direct consequence of Proposition \ref{norm.ieq} is that for any $x \in \R^n$,
\[
\|x\|_2 \le \frac{\|x\|_1}{\sqrt{n}} + \frac{\sqrt{n}\|x\|_{\infty}}{4}.
\]
}
\end{remark}

\section{New RIP Bounds of Compressed Sensing Matrices}
\label{sec:newbds}

In this section, we consider new RIP conditions for sparse signal
recovery. However, the results can be easily extended to general
signals $\beta$ with error bounds involved with $\beta_{-\max(k)}$, as
discussed in \cite{CaWaXu1, CaXuZh}.

Suppose
\[
y=\Phi\beta + z
\]
with $\|z\|_2\le \varepsilon$. Denote $\hat\beta$
the solution of the following $\ell_1$ minimization problem,
\beq
\label{hat.beta}
\hat \beta=\argmin\{\|\gamma\|_1: \quad \mbox{subject to } \quad \|\Phi\gamma-y\|_2 \le \varepsilon\}.
\eeq

\bet
\label{thm:4.1}
Suppose $\beta$ is $k$-sparse. Let $k_1, k_2$ be positive integers
such that $k_1\ge k$ and $8(k_1-k) \le k_2$. Let
$$t=\sqrt{\frac{k_1}{k_2}}+\frac{1}{4}\sqrt{\frac{k_2}{k_1}}-\frac{2(k_1-k)}{\sqrt{k_1 k_2}}.$$
Then under the condition
\[
\delta_{k_1}+t\theta_{k_1 , k_2 } < 1,
\]
Then the $\ell_1$ minimizer $\hat \beta$ defined in (\ref{hat.beta}) satisfies
\[
\|\beta-\hat\beta\|_2\le \frac{2\sqrt{2} \sqrt{1+\delta_{k_1 }}}
{1-\delta_{ k_1 }-t\theta_{k_1, k_2}} \varepsilon.
\]
In particular, in the noiseless case where $y=\Phi\beta$, $\ell_1$
minimization recovers $\beta$ exactly.
\eet
\begin{remark}{\rm
Different choices of $k_1$ and $k_2$ can result
in different conditions. Here we list several of them which are of certain interest.\footnote{
Here we assume that the the fraction multiple of $k$  are integers. Otherwise, we have to
use the ceiling notation. }

\begin{center}
\begin{tabular}{|l|l|c|}
\hline
$k_1$ & $k_2$ & Recovery condition \\
\hline
\hline
$k$ & $k$ & $\delta_{k}+1.25\theta_{k,k}<1$\\
\hline
$k$ & ${4\over 9}k$ & $\delta_k+\frac{5}{3}\theta_{k,\frac{4k}{9}}<1$\\
\hline
${9\over 8}k$ & $k$ &$\delta_{\frac{9k}{8}}+\sqrt{9\over 8}\theta_{k, \frac{9k}{8}}<1$\\
\hline
${8\over 7}k$ & ${8\over 7}k$ & $\delta_{\frac{8k}{7}}+\theta_{\frac{8k}{7}, \frac{8k}{7}} <1$\\
\hline
\end{tabular}
\end{center}
}
\end{remark}

Now let us prove the theorem.\\
\begin{proof}
Let $h=\hat\beta - \beta$. For any subset $Q\subset \{1,2,\cdots, p\}$, we define
\[
h_Q = h\I_Q.
\]

Suppose  $|h(1)|\ge |h(2)|\ge \cdots
\ge |h(k+1)|\ge |h(k+2)| \ge \cdots$. \\Let $T=\{1,2,\cdots, k\}$ and $\Omega$ be the
support of $\beta$. The following fact, which is based on the minimality of $\hat\beta$, has been
widely used, see \cite{CaWaXu1,CanRomTao,DonHuo}.
\[
\|h_{\Omega}\|_1 \ge \| h_{\Omega^c}\|_1.
\]
It is obvious that $\|h_{\Omega^c\cap T}\|_1 \ge \|h_{\Omega^c\cap T^c}\|_1$, so
we have
\[
\|h_T\|_1 \ge \|h_{T^{c}}\|_1.
\]

Partition $\{1,2,\cdots, p\}$ into the following sets:
\[
S_0=\{1,2,\cdots, k_1\}, S_1=\{k_1+1, \cdots, k_1+k_2\},
S_2=\{k_1+k_2+1, \cdots, k_1+2k_2\}, \cdots.
\]

Then it follows from Proposition~\ref{norm.ieq} that
\begin{eqnarray*}
\sum_{i\geq 1}\|h_{S_i}\|_2&\leq&
\sum_{i\ge 1}\frac{\|h_{S_i}\|_1}{\sqrt{k_2}}+\\
&&\frac{\sqrt{k_2}}{4}\big(|h(k_1+1)|- |h(k_1+k_2)|+|h(k_1+k_2+1)|-|h(k_1+2k_2)|+\cdots \big)\\
&\leq&\frac{1}{\sqrt{k_2}}(\|h_{T^{c}}\|_1-(k_1-k)|h(k_1+1)|)+\frac{\sqrt{k_2}}{4}|h(k_1+1)|\\
&\leq&\frac{1}{\sqrt{k_2}}(\|h_{T}\|_1-(k_1-k)|h(k_1+1)|)+\frac{\sqrt{k_2}}{4}|h(k_1+1)|\\
&\leq&\frac{1}{\sqrt{k_2}}(\|h_{S_0}\|_1-2(k_1-k)|h(k_1+1)|)+\frac{\sqrt{k_2}}{4}|h(k_1+1)|\\
&\leq&\sqrt{\frac{k_1}{k_2}}\|h_{S_0}\|_2+\left(\frac{\sqrt{k_2}}{4} - \frac{2(k_1-k)}{\sqrt{k_2}}\right)|h(k_1+1)|\\
&\leq&\sqrt{\frac{k_1}{k_2}}\|h_{S_0}\|_2+\left(\frac{\sqrt{k_2}}{4\sqrt{k_1}} - \frac{2(k_1-k)}{\sqrt{k_1
k_2}}\right)\|h_{S_0}\|_2=t\|h_{S_0}\|_2
\end{eqnarray*}
Now
\begin{eqnarray*}
|\langle \Phi h, \Phi h_{S_0}\rangle| &=& |\langle\Phi h_{S_0}, \Phi
h_{S_0}\rangle+\sum_{i\ge
1}\langle\Phi_{S_i}h_{S_i}, \Phi h_{S_0}\rangle|\\
  &\ge& (1-\delta_{k_1})\|h_{S_0}\|_2^2 -\theta_{k_1, k_2}\|h_{S_0}\|_2\sum_{i\ge
  1}\|h_{S_i}\|_2\\
&\ge& (1-\delta_{k_1}-t\theta_{k_1, k_2})\|h_{S_0}\|_2^2
\end{eqnarray*}
Note that
\[
\|\Phi h\|_2=\|\Phi(\beta-\hat{\beta})\|_2\le \|\Phi\beta-y\|_2+
\|\Phi\hat{\beta}-y\|_2\le 2\varepsilon.
\]
Also the next relation
\begin{eqnarray*}
\|h_{S_0^c}\|_2^2\leq \|h_{S_0^c}\|_1 \frac{\|h_{S_0}\|_1}{k_1}\leq
\frac{\|h_{S_0}\|_1^2}{k_1}\leq \|h_{S_0}\|_2^2
\end{eqnarray*}
implies
\[
\|h\|_2^2 = \|h_{S_0}\|_2^2+ \|h_{S_0^c}\|_2^2 \le 2\|h_{S_0}\|_2^2.
\]

Putting them together we get\footnote{If $h_{S_0}= 0$, then the theorem is trivially true.
So here  we assume that $h_{S_0}\neq 0$.}
\begin{eqnarray*}
\|h\|_2 &\le& \sqrt{2}\|h_{S_0}\|_2\\
&\le&  \frac{\sqrt{2}|\langle \Phi h, \Phi
h_{S_0}\rangle|}{(1-\delta_{k_1}-t \theta_{k_1,
k_2})\|h_{S_0}\|_2}\\
&\le&\frac{\sqrt{2}\|\Phi h\|_2\|\Phi h_{S_0}\|_2}
{(1-\delta_{k_1}- t \theta_{k_1, k_2})\|h_{S_0}\|_2}\\
&\le &
\frac{2\sqrt{2}\varepsilon \sqrt{1+\delta_{k_1}}\|h_{S_0}\|_2}
{(1-\delta_{k_1}-t\theta_{k_1, k_2})\|h_{S_0}\|_2}\\
&\le& \frac{2\sqrt{2}\sqrt{1+\delta_{k_1}}}
{1-\delta_{k_1}-t \theta_{k_1, k_2}}\varepsilon.
\end{eqnarray*}
\end{proof}

The following is our main result of the paper. It is the consequence
of Theorem~\ref{thm:4.1} and the square root lifting inequality.
\bet
\label{thm:4.2}
Let $y=\Phi\beta + z$ with $\|z\|_2\le \varepsilon$. Suppose
$\beta$ is $k$-sparse with $k > 1$. Then under the condition
\[
\delta_{k} < 0.307
\]
the constrained $\ell_1$ minimizer $\hat \beta$ given in
(\ref{hat.beta}) satisfies
\[
\|\beta-\hat\beta\|_2\le \frac{\varepsilon}{0.307 - \delta_{k}},
\]
In particular, in the noiseless case $\hat \beta$ recovers $\beta$ exactly.
\eet

To the best of our knowledge, this seems to be the first result for sparse recovery
with conditions that only involve $\delta_k$.

\

\begin{proof}
We will present the proof for the case $k\equiv 0 \pmod 9$ in this
section. This is the case that can be treated in a concise way and for
which the proof also conveys the main ideas. The complete proof will be
given in the appendix.

In Theorem \ref{thm:4.1}, set $k_1=k$ and $k_2=\frac{4}{9}k $. Let
\[
t = \sqrt{\frac{k}{k_2}} +\frac{1}4\sqrt{\frac{k_2}{k}}=\frac{5}3.
\]
Then under the condition
$$\delta_k+\frac{5}3\theta_{k,\frac{4}{9}k}<1$$
we have
\[
\|\beta-\hat\beta\|_2\le \frac{2\sqrt{2} \sqrt{1+\delta_{k }}}
{1-\delta_{ k }-\frac{5}3\theta_{k,  \frac{4}{9}k}} \epsilon.
\]

Using the square root lifting inequality, we get
\begin{eqnarray*}
\delta_{k} + \frac{5}3\theta_{k,\frac{4k}9} & = & \delta_{k} + \frac{5}3\theta_{\frac{9}5\frac{5k}9,\frac{4k}9}\\
&\le &\delta_{k} + \frac{5}3\sqrt{\frac{9}5} \theta_{\frac{5k}9,\frac{4k}9}\le (1+\sqrt{5})\delta_k\\
&<& 1.
\end{eqnarray*}
In this case,
\begin{eqnarray*}
\|\beta-\hat\beta\|_2 &\le & \frac{2\sqrt{2} \sqrt{1+\delta_{k }}}
{1-\delta_{ k }-t\theta_{k, \frac{4k}9}} \varepsilon \le \frac{2\sqrt{2} \sqrt{1+\delta_{k }}}{1-(1+\sqrt{5})\delta_k}\varepsilon\\
&\le & \frac{3.256}{1-3.256 \delta_k}\varepsilon
\le \frac{\varepsilon}{0.307 - \delta_{k}}.
\end{eqnarray*}
\end{proof}

\begin{remark}
{\rm
\begin{enumerate}
\item It can be seen from the proof that we actually have a slightly better estimation, that is,
\[
\|\beta-\hat\beta\|_2\le \frac{2\sqrt{2} \sqrt{1+\delta_{k
}}}{1-C_0\cdot \delta_{k}} \varepsilon,
\]
where $C_0 = 1+\frac{23}{2\sqrt{26}}< 3.256.$
\item For simplicity, we have focused on recovering $k$-sparse signals in
the present paper. When $\beta$ is not $k$-sparse, $\ell_1$
minimization can also recover $\beta$ with accuracy if $\beta$ has
good $k$-term approximation. Similar to \cite{CaWaXu,CaXuZh}, this result can be extended to the general setting.
Under the condition $\delta_k < 0.307$,
Theorem \ref{thm:4.2} holds with the error bound
\[
\|\hat\beta - \beta\|_2 \le \frac{\varepsilon}{0.307 - \delta_{k}}+ \frac{1}{0.307 - \delta_{k}}\frac{\|\beta_{-\max(k)}\|_1}{\sqrt{k}}.
\]
\end{enumerate}
}
\end{remark}

We now consider stable recovery of $k$-sparse signals with error in a
different bounded set.
Cand\`es and Tao \cite{CanTao07} treated the sparse signal recovery
in the Gaussian noise case by  solving $(P_{\sb})$ with
$\sb = \sb^{DS} =\{z : \|\Phi' z\|_{\infty}\le \eta\}$ and referred the
solution as the {\sl Dantzig Selector}. The following result shows
that the condition $\delta_k < 0.307$ is also sufficient when
the error is in the bounded set $\sb^{DS} =\{z : \|\Phi'
z\|_{\infty}\le \lambda\}$.
\bet
\label{DS.thm}
Consider the model (\ref{model}) with $z$ satisfying
$\|\Phi' z\|_{\infty}\le \lambda$. Suppose $\beta$ is $k$-sparse and
 $\hat\beta$ is the minimizer
$$\hat \beta =\argmin_{\gamma\in\R^p}\{\|\gamma\|_1 : \|\Phi'
  (y-\Phi\gamma)\|_{\infty}\le \lambda\}.$$
Then
\[
\|\hat\beta - \beta\|_2 \le \frac{\sqrt{k}}{0.307-\delta_k}\lambda
\]
\eet
The proof of this theorem can be easily obtained based on a
minor modification of the proof of Theorem~\ref{thm:4.1}.

\section{Upper Bounds of $\delta_k$}
\label{sec:uppbds}

We have established the sparse recovery condition
\[
\delta_k < 0.307
\]
in the previous section. It is interesting to know the limit
of possible improvement within this framework. In this section,
we shall show that this bound cannot be substantively improved.
An explicitly example is constructed in which
$\delta_{k}=\frac{k-1}{2k-1} < 0.5$, but it is impossible to recover
certain $k$-sparse signals. Therefore,
the bound for $\delta_k$ cannot go beyond $0.5$ in order to guarantee
stable recovery of $k$-sparse signals.

This question was considered for the case of $\delta_{2k}$. In \cite{CaWaXu1}, among a family of recovery
conditions, it is shown  that
\[
\delta_{2k} < 0.472
\]
is sufficient for reconstructing $k$-sparse signals. On the other hand, the results of Davies and  Gribonval
\cite{DavGri} indicate that $\frac{1}{\sqrt{2}}\approx 0.707$ is likely the upper bound for $\delta_{2k}$.

\bet
\label{thm:5.1}
Let $k$ be a positive integer. Then there exists a $(2k-1)\times 2k$ matrix
$\Phi$ with the restricted isometry constant $\delta_k = \frac{k-1}{2k-1}$, and two nonzero
$k$-sparse vectors
$\beta_1$ and $\beta_2$ with disjoint supports such that
\[
\Phi \beta_1 = \Phi \beta_2.
\]
\eet
\begin{remark}{\rm This result implies that the model (\ref{model}) is
not identifiable in general under the condition
$\delta_k = \frac{k-1}{2k-1}$
and therefore not all $k$-sparse signals can be recovered exactly in
the noiseless case. In the noisy case, it is easy to see that
Theorem \ref{thm:4.2} fails because no estimator
$\hat\beta$ can be close to both $\beta_1$ and $\beta_2$ when the noisy level
 $\varepsilon$ is sufficiently small.
}
\end{remark}

\begin{proof}
Let $\Gamma$ be a $2k\times 2k$ matrix such that each diagonal element
of $\Gamma$ is 1 and each off diagonal element equals
$-\frac{1}{2k-1}$. Then it is easy to see that $\Gamma$ is a
positive-semidefinite matrix with rank $2k-1$.

Note that the symmetric matrix $\Gamma$ can be decomposed as $\Gamma=\Phi'\Phi$ where $\Phi$ is a
$(2k-1)\times 2k$ matrix with rank $2k-1$. More precisely, since $\Gamma$ has two distinct
eigenvalues $\sds \frac{2k}{2k-1}$ and $0$, with the multiplicities of $2k-1$ and $1$ respectively,
there is an orthogonal matrix $U$ such that
\[
\Gamma = U \mbox{Diag}\big\{\underbrace{\frac{2k}{2k-1},\frac{2k}{2k-1},\cdots,
\frac{2k}{2k-1}}_{2k-1}, 0\big\} U'.
\]
Define $\Phi$ as
\[
\Phi= \begin{pmatrix} \sqrt{\frac{2k}{2k-1}} & 0 & \cdots & 0 & 0 \\
0 & \sqrt{\frac{2k}{2k-1}} & \cdots & 0 & 0 \\
  &                        & \ddots & &\\
  0 & 0 &\cdots & \sqrt{\frac{2k}{2k-1}}&0\\
\end{pmatrix} U'.
\]

Let $T\subset \{1,2,\cdots, 2k\}$ with $|T|=k$. Then it can be verified that
\[
\Phi_T'\Phi_T = \begin{pmatrix} 1 & -\frac{1}{2k-1} & \cdots & -\frac{1}{2k-1} \\
 -\frac{1}{2k-1}& 1  & \cdots & -\frac{1}{2k-1} \\
  &                        & \ddots & &\\
  -\frac{1}{2k-1} & -\frac{1}{2k-1} &\cdots & 1 \\
\end{pmatrix}_{k\times k}.
\]
The characteristic polynomial of $\Phi_T'\Phi_T$ is
\[
p(\lambda) = \left( \lambda-\frac{k}{2k-1}\right)\left(\lambda-\frac{2k}{2k-1}\right)^{k-1}.
\]

This shows that for $\Phi$,
\[
\delta_k(\Phi) =1- \frac{k}{2k-1}=\frac{k-1}{2k-1}.
\]

Since the rank of $\Phi$ is $2k-1$, there exists some $\gamma\in \R^{2k}$
such that $\gamma\neq 0$ and $\Phi\gamma=0$.
Suppose $\beta_1, \; \beta_2\in
\R^{2k}$ are given by
$$\beta_1=(\gamma(1),\gamma(2),\cdots,\gamma(k),0,,\cdots,0)',$$
and
$$\beta_2=(\underbrace{0,0,\cdots,0}_{k},-\gamma(k+1),-\gamma(k+2),\cdots,-\gamma(2k))'.$$

Then both $\beta_1$ and $\beta_2$ are $k$-sparse vectors but $\Phi\beta_1=\Phi\beta_2$. This means
the model is not identifiable within the class of $k$-sparse signals.
\end{proof}

\appendix
\renewcommand{\thesubsection}{A-\arabic{subsection}}
\setcounter{equation}{0}  
\section*{APPENDIX}  
\subsection{Proof of Lemma \ref{lem:2.0}}
\begin{proof}
Let us start with the proof of (\ref{eqn2.0-1}).
We just need to show that for any $k+k'$ sparse vector $c\in \R^p$,
$$(1-\theta_{k,k'}-\frac{k\delta_k+k'\delta_{k'}}{k+k'})\|c\|_2^2\leq \|\Phi c\|_2^2\leq
(1+\theta_{k,k'}+\frac{k\delta_k+k'\delta_{k'}}{k+k'})\|c\|_2^2.$$
Assume, without loss of generality, that
$c=(c(1),c(2),\cdots,c(k+k'),0,0\cdots,0)$ and
$$|c(1)|\geq |c(2)|\geq\cdots\geq |c(k+k')|.$$
We may also assume $k\leq k'$.

Let
$c_1=(c(1),\cdots,c(k),0,0,\cdots,0)$ and $c_2=c-c_1$. It is easy
to see that $$\|c_1\|_2^2\geq \frac{k}{k+k'}\|c\|_2^2.$$ Now
\begin{eqnarray*}
\|\Phi c\|_2^2&=&\|\Phi c_1\|_2^2+\|\Phi c_2\|_2^2+2<\Phi c_1,\Phi c_2>\\
&\ge &(1-\delta_k)\|c_1\|_2^2+(1-\delta_{k'})\|c_2\|_2^2-2\theta_{k,k'}\|c_1\|_2
\|c_2\|_2\\
&\ge& \left((1-\delta_k)\frac{\|c_1\|_2^2}{\|c\|_2^2}+(1-\delta_{k'})
(1-\frac{\|c_1\|_2^2}{\|c\|_2^2})\right)\|c\|_2^2-\theta_{k,k'}(\|c_1\|_2^2+\|c_2\|_2^2)\\
&\ge &(1- \frac{k\delta_k+k'\delta_{k'}}{k+k'})\|c\|_2^2- \theta_{k,k'}\|c\|_2^2.
\end{eqnarray*}
The last inequality is due to the fact that
$\delta_k\leq\delta_{k'}$ and $\|c_1\|_2^2\geq
\frac{k}{k+k'}\|c\|_2^2$.

We can prove the upper bound by in a similar manner.

Next, we prove (\ref{eqn2.0-2}).
We just need to show that for any $k+k'$ sparse vector $c\in \R^p$,
$$(1-\frac{2\sqrt{k k'
}}{k+k'}\theta_{k,k'}-\max\{\delta_k,\delta_{k'}\})\|c\|_2^2\leq
\|\Phi c\|_2^2\leq (1+\frac{2\sqrt{k k'
}}{k+k'}\theta_{k,k'}+\max\{\delta_k,\delta_{k'}\})\|c\|_2^2.$$
We make the same arrangement of $c$ as in the proof of (\ref{eqn2.0-1}).
Then let

$c_2=(c(1),\cdots,c(k'),0,0,\cdots,0)$ and $c_1=c-c_2$. It is easy
to see that $$\|c_1\|_2^2\leq \frac{k}{k+k'}\|c\|_2^2.$$
Now
\begin{eqnarray*}
\|\Phi c\|_2^2&=&\|\Phi c_1\|_2^2+\|\Phi c_2\|_2^2+2<\Phi c_1,\Phi c_2>\\
&\leq&(1+\delta_k)\|c_1\|_2^2+(1+\delta_{k'})\|c_2\|_2^2+2\theta_{k,k'}\|c_1\|_2
\|c_2\|_2\\
&\leq&(1+\max\{\delta_k,\delta_{k'}\})\|c\|_2^2+2\theta_{k,k'}\|c\|_2^2(\frac{\|c_1\|_2}{\|c\|_2}\frac{\|c_2\|_2}{\|c\|_2})\\
&\leq&(1+\max\{\delta_k,\delta_{k'}\})\|c\|_2^2+\theta_{k,k'}\frac{2\sqrt{k
k' }}{k+k'}\|c\|_2^2.
\end{eqnarray*}
The last inequality is because $\|c_1\|_2^2\leq
\frac{k}{k+k'}\|c\|_2^2\leq \frac{1}{2}\|c\|_2^2$. The lower bound
can be proved by similar argument.
\end{proof}

\subsection{Proof of Corollary \ref{cor:2.1}}
\begin{proof}
From (\ref{eq:2.5}) and the square root lifting inequality, we have
\begin{eqnarray*}
\delta_{4k}&\le & \theta_{2k,2k}+\delta_{2k}\\
&\le & \sqrt{2}\sqrt{2}\theta_{k,k}+\delta_{2k}\\
&\le &3\delta_{2k}.
\end{eqnarray*}

By Lemma \ref{lem:2.0}, we have
\begin{eqnarray*}
\delta_{3k}&\le & \theta_{2k, k}+\frac{2\delta_{2k}+\delta_{k}}{3}\\
&\le & \frac{1}3\delta_{k} + (\sqrt{2}+\frac{2}3 )\delta_{2k}.
\end{eqnarray*}

\end{proof}
\subsection{Completion of the Proof of Theorem \ref{thm:4.2}}
\begin{proof}
In Theorem \ref{thm:4.1}, let $k_1=k, 1 \le k_2 < k$, and
\[
t = \sqrt{\frac{k}{k_2}} +\frac{1}4\sqrt{\frac{k_2}{k}} .
\]
Then under the condition
$$\delta_k+t\theta_{k,k_2}<1$$
we have
\[
\|\beta-\hat\beta\|_2\le \frac{2\sqrt{2} \sqrt{1+\delta_{k }}}
{1-\delta_{ k }-t\theta_{k, k_2}} \epsilon.
\]

By the square root lifting inequality,
\begin{eqnarray*}
\delta_k+t\theta_{k,k_2}&\leq& \delta_k+
t\theta_{\frac{k}{k-k_2}(k-k_2), k_2}\notag
\\
&\le& \delta_k+ t\sqrt{\frac{k}{k-k_2}} \theta_{k-k_2, k_2}\notag \\
&\leq&\left(1+t\sqrt{\frac{k}{k-k_2}}\right)\delta_k.
\end{eqnarray*}

Denote $A_k = 1+t\sqrt{\frac{k}{k-k_2}}$ and let
\[
f(x) = 1+\frac{1}{\sqrt{1-x}}\left(\frac{1}{\sqrt{x}}+\frac{1}4 \sqrt{x}\right)\quad x\in (0,1),
\]
then
\[
A_k = f(\frac{k_2}k).
\]
Since $\sds f'(x)=\frac{9x-4}{8(x-x^2)^{\frac{3}2}}$, $f$ is increasing
when $\frac{4}{9}\leq x < 1$ and decreasing $0< x <\frac{4}{9}$.

Let $0 \le r_k \le 8$ be the integer such that $r_k\equiv 4k \pmod 9$. Now we choose $k_2$ specifically as
follows:
\[
k_2 = \left\{
\begin{array}{ll} \lfloor \frac{4}{9}k \rfloor  &\quad \mbox{ if } r_k \le 4,\\
\lceil \frac{4}{9}k\rceil &\quad \mbox{ if } r_k > 4.
\end{array}\right.
\]
By the definition of $k_2$ we get immediately that
\[
A_k \le \max\left(f\big(\frac{4}9+\frac{4}{9k}\big), f\big(\frac{4}9-\frac{4}{9k}\big)\right).
\]
In particular, when $ k\ge 7$,
\[
A_k \le f\big(\frac{8}{21}\big)=1+\frac{23}{2\sqrt{26}}< 3.256.
\]

A direct calculation shows that
\[
A_4 = A_6 = f(0.5) = 3.25, \mbox{ and } A_5 = f(0.4) < 3.246.
\]

In order to estimate $A_k$ for $k=2, 3$, we note that in these cases $k_2=1$ and $t=\sqrt{k}$.
This is based on the observation that in
the proof of Theorem \ref{thm:4.1}, $h(k_1+(i-1)k_2+1)=h(k_1+ik_2)$ for $i>0$.
So
\[
A_2 = 1+\sqrt{2}\sqrt{2}=3, A_3 = 1+\sqrt{3}\sqrt{\frac{3}2}=3.122.
\]

These yield
\begin{eqnarray*}
\delta_k+t\theta_{k,\lceil \frac{4}{9}k\rceil}&\leq& A_k\delta_k
\\
&\le&  3.256\cdot \delta_k < 1.
\end{eqnarray*}
With the above relation, we can also get
\begin{eqnarray*}
\|\beta-\hat\beta\|_2 &\le & \frac{2\sqrt{2} \sqrt{1+\delta_{k }}}
{1-\delta_{ k }-t\theta_{k, k_2}} \varepsilon \le \frac{2\sqrt{2} \sqrt{1+\delta_{k
}}}{1-3.256\cdot \delta_{k}}\varepsilon\\
&\le & \frac{3.256}{1-3.256 \delta_k}\varepsilon
\le \frac{\varepsilon}{0.307 - \delta_{k}}.
\end{eqnarray*}
The theorem is proved.

\end{proof}

\end{document}